\documentclass[11pt,a4paper]{article}
\usepackage[utf8]{inputenc}
\usepackage{xcolor}
\usepackage{amsthm}
\usepackage{graphicx}
\usepackage{amssymb,amsmath}
\usepackage{comment, hyperref}
\usepackage[ruled, lined, linesnumbered, commentsnumbered, longend]{algorithm2e}
\usepackage{tikz,color}
\usepackage{float}
\usepackage{microtype}
\usepackage{thm-restate}

\usepackage[a4paper, margin=3.5 cm]{geometry}

\newtheorem{definition}{Definition}
\newtheorem{theorem}{Theorem}

\title{On Finality in Blockchains}

\author{Emmanuelle Anceaume$^1$, Antonella Del Pozzo$^2$,\\ Thibault Rieutord$^2$, Sara Tucci-Piergiovanni$^2$}
\date{%
	$^1$CNRS, Univ Rennes, Inria, IRISA, Rennes, France\\%
	$^2$Université Paris-Saclay, CEA, List, F-91120, Palaiseau, France\\[2ex]%
}

\begin{document}

\maketitle
\begin{abstract}
There exist many forms of Blockchain finality conditions, from deterministic to probabilistic terminations. To favor availability against consistency in the face of partitions, most blockchains only offer probabilistic eventual finality: blocks may be revoked after being appended to the blockchain, yet with decreasing probability as they sink deeper into the chain. Other blockchains favor consistency by leveraging the immediate finality of Consensus -- a block appended is never revoked -- at the cost of additional synchronization.

In this paper, we focus on necessary and sufficient conditions to implement a blockchain with deterministic eventual finality, which ensures that selected main chains at different processes share a common increasing prefix. This is a much weaker form of finality that allows us to provide a solution in an asynchronous system subject to unlimited number of byzantine failures. We study stronger forms of eventual finality as well and show that it is unfortunately impossible to provide a \emph{bounded displacement}. By bounded displacement we mean that the (unknown) number of blocks that can be revoked from the current blockchain is bounded. This problem reduces to consensus or eventual consensus depending on whether the bound is known or not. We also show that the classical selection mechanism, such as in Bitcoin, that appends blocks at the longest chain is not compliant with a solution to eventual finality.
\end{abstract}

\section{Introduction}
\label{sec:introduction}
This paper focuses on blockchain finality, which refers to  the time when it becomes impossible to remove a block that has previously been appended to the blockchain. Blockchain finality can be deterministic or probabilistic, immediate or eventual. Informally, immediate finality guarantees,  as its name suggests, that when a block is appended to a local copy, it  is immediately finalized and thus will never be revoked in the future. Most of the permissionned  blockchains satisfy the deterministic form of immediate consistency, including Red Belly blockchain~\cite{redbelly17} and  Hyperledger Fabric blockchain~\cite{hyperledger},  while the probabilistic form of immediate consistency  is typically achieved by permissionless pure proof-of-stake blockchains such as Algorand~\cite{algorand}. Designing  blockchains with immediate finality favors consistency against availability in presence of transient partitions of the system. It leverages the properties of Consensus (i.e a decision value is unique and agreed by everyone), at the cost of synchronization constraints. 
On the other hand eventual finality  ensures that all local copies of the blockchain share a common increasing prefix, and thus finality of  their blocks increases as more blocks are appended to the blockchain. The majority of cryptoassets blockchains, with Bitcoin~\cite{Nakomoto_2008} and Ethereum~\cite{Ethereum} as celebrated examples, guarantee   eventual finality with some probability: blocks may be revoked after being appended to the blockchain, yet with decreasing probability as they sink deeper into the chain. 

Formalization of blockchains in the lens of distributed computing has been recognized as an extremely important topic~\cite{Herlihy2017}. Garay et al.~\cite{GarayKL15} have been the first to analyse the Bitcoin backbone protocol and to define invariants  this protocol has to satisfy to verify with high probability an eventual consistent prefix, i.e. probabilistic eventual finality.    The authors analyzed the protocol in a synchronous system, while Pass et al. \cite{pass2017} extended this line of work considering a more adversarial network. Anta et al. \cite{AF2018} proposed a  formalization of distributed ledgers modeled as  an ordered list of records along with implementations for sequential consistency and linearizability using a total order broadcast abstraction. Not related to the blockchain data structure, authors of \cite{Guerraoui19} formalized the notion of  cryptocurrency showing that Consensus is not needed.  

While probabilistic eventual finality has been widely studied in the context of Bitcoin~\cite{GarayKL15,cachin2017,pass2017}, some few studies have started to lay the foundations of  the computation power of blockchains with deterministic eventual finality consistency.
Anceaume et al. \cite{ADLPT19} have been the first to propose a formal blockchain specification as a composition of abstract data types all together with a hierarchy of consistency criteria. This work captured the  convergence process of two distinct classes of blockchain systems: the class providing strong prefix (for each pair of chains returned at two different processes, one is the prefix of the other) and the class providing eventual prefix, in which the common prefix eventually converges. Interestingly, the authors of~\cite{ADLPT19} show that to solve strong prefix, Consensus is needed, however the work does not address the solvability of eventual prefix, which is the focus of this paper. 

 The objective of this paper is thus  to push  further this line of inquiry by presenting  
 an in-depth study of deterministic eventual finality.
We reinvestigate in Section~\ref{sec:definitions} the definition of (deterministic) eventual prefix consistency presented in~\cite{ADLPT19} to fit both the context in which an infinite number of blocks are appended to the blockchain.  We introduce the notion of bounded displacement, which informally says that  the number  of  blocks  that  can  be  revoked  from  the  current blockchain is bounded. Providing solutions that guarantee a known bound of the displacement reveals to be an important crux in the construction of blockchains. Specifically, in Section~\ref{sec:eventual_consensus_reductions} we  show that known bounded displacement eventual prefix consistency, that is eventual prefix consistency guaranteeing  a bounded and known displacement, is equivalent to Consensus. We also demonstrate  that unknown bounded displacement eventual prefix consistency, that is eventual prefix consistency guaranteeing  a bounded but unknown displacement, is equivalent to eventual strong prefix consistency. Finally we show that eventual strong prefix consistency is stronger than Eventual Consensus, an abstraction that captures eventual agreement among all participants. Then in Section~\ref{sec:eventual_prefix_solutions} we provide an algorithm  that guarantees eventual prefix consistency in an asynchronous environment with an unbounded number of Byzantine processes. We are not aware of any such solution in the literature. We also show that it is impossible to build a blockchain that guarantees eventual prefix consistency based on the longest chain rule. Note that such a selection function is used by many cryptoassets blockchains.  Finally, we discuss impossibilities and possibilities of unbounded displacement eventual prefix consistency. In particular, we propose an algorithm that solves unknown bounded displacement eventual prefix in an eventually synchronous environment in presence of less than a majority of Byzantine processes. Once again, we are not aware of any such solution in the literature. 
To summarize, we close the gap between eventual prefix and strong prefix consistencies.

\section{Definitions}
\label{sec:definitions}

\subsection{Preliminary Definitions}
Similarly to~\cite{ADLPT19}, we describe a blockchain object as an abstract data type which allows us to completely characterize a blockchain by the operations it exports~\cite{Liskov}.
The basic idea underlying the use of abstract data types is to specify shared objects using two
complementary facets: a sequential specification that describes the semantics of the object, and
a consistency criterion over concurrent histories, i.e. the set of admissible executions in a concurrent environment~\cite{matthieu-book}. Prior to presenting the blockchain abstract data type  we first recall the formalization used to describe an abstract data type (ADT).

\subparagraph*{Abstract data types.}
An abstract data type ($\textit{ADT}$) is a tuple of the form
$T=(A,B,Z,z_0,\tau,\delta)$.
Here~$A$ and $B$ are countable sets called the \emph{inputs} and
\emph{outputs}.  $Z$ is a countable set of abstract object 
\emph{states}, $z_0\in Z$ being the initial state of the object. 
The map~$\tau: Z\times A \to Z$ is the \emph{transition function}, specifying the effect of an input on the object state 
and the map~$\delta: Z\times A \to B$ is the \emph{output function}, specifying the output returned for a given input and 
object local state. The input represents an operation with its parameters, where (i) the operation can have a side-effect that changes the abstract state according to transition function $\tau$  and (ii) the operation can return values taken in the output $B$, which depends on the state in which it is called and the output function $\delta$.

\subparagraph*{Concurrent  histories of an ADT.}
\label{sec:conc-spec-adts}
Concurrent histories are defined considering asymmetric event structures, i.e., partial order relations among events executed by different processes.
\begin{definition}
\label{def:adthistory}{\bf(Concurrent history $H$)}
	The execution of a program that uses an abstract data type  T =$\langle A, B, Z, \xi_0, \tau, \delta \rangle $ defines a concurrent history  $H=\langle \Sigma, E, \Lambda, \mapsto, \prec, \nearrow \rangle$,  where
	\begin{itemize}
		\item $\Sigma = A \cup ( A \times B)$ is a countable set of operations;
		\item $E$ is a countable set of events that contains all the ADT operations invocations  and all ADT operation response events;
		\item $\Lambda:E \rightarrow \Sigma$ is a function which associates events to the operations in $\Sigma$;
		\item $\mapsto$: is the process order, irreflexive order over the events of $E$. Two events $(e,e') \in E^2$ are ordered by $\mapsto$ if they are produced by the same process, $e \neq e'$ and $e$ happens before $e'$, that is denoted as $e \mapsto e'$. 
		\item  $\prec$: is the operation order, irreflexive order over the events of $E$. For each couple $(e, e') \in E^2$ 
		if $e'$ is the invocation of an operation occurred at time $t'$ and $e$ is the response of another operation occurred at time $t$ with $t < t'$ then $e \prec e'$; 
		\item $\nearrow$: is the program order, irreflexive order over $E$, for each couple $(e, e') \in E^2$ with $e \neq e'$ if $e \mapsto e'$ or $e \prec e'$ then $e \nearrow e'$.	
\end{itemize}
\end{definition}

\subsection{The blockchain ADT}

Following~\cite{ADLPT19}, we represent a blockchain structure as a a tree of blocks. Indeed, while consensus-based blockchains prevent forks or branching in the tree of blocks, most blockchain systems based on proof-of-work allow the occurrence of forks to happen hence presenting blocks under a tree structure.  The blockchain object is thus defined as a  blocktree abstract data type (Blocktree ADT). However, we modify the original version of the blocktree ADT proposed in~\cite{ADLPT19} so that  distinct selection functions $f_r()$ and $f_a()$  for respectively  the {\sf read()} and {\sf append()} operations  exist. As will be shown latter in the paper, by relying on different functions  to select  the blockchain to be returned by a {\sf read()}  operation and to select the chain of the blocktree to which the new block is {\sf appended}, this allows us to present the first  algorithm that constructs a blockchain satisfying deterministic eventual finality  in an asynchronous system  in presence of an unbounded number of Byzantine processes. 

\subsubsection{Sequential Specification}
A blocktree data structure is a directed rooted tree $bt = (V_{bt},E_{bt})$ where $V_{bt}$ represents a set of blocks and $E_{bt}$ a set of edges such that each block has a single path towards the root of the tree $b_0$ called the genesis block. Let $\mathcal{B}$ be the countable and non empty set of uniquely identified blocks and let $\mathcal{BC}$ be the countable non empty set of blockchains, where a blockchain is a path from a leaf of $bt$ to $b_0$. A blockchain is denoted by $bc$. 
The structure is equipped with two operations {\sf append}() and  {\sf read}().
Operation  {\sf append}(b)  adds the block $b\not\in bt$ to $V_{bt}$ and adds the edge $(b,b')$ to $E_{bt}$ where $b'\in V_{bt}$ is returned by the append selection function $f_a()$  applied to $bt$.
Operation  {\sf read}()  returns  the chain $bc$ selected by the read selection function $f_r()$ applied to $bt$. In the following the chain $bc$ returned by a {\sf read}() operation $r$ is often denoted by $r/bc$.
Only blocks satisfying some validity predicate $P$ can be appended to the tree. The definition of the Blocktree (BT) ADT amended with the two types of selection functions is as follows.

\begin{definition}{(Blocktree ADT (BT-ADT)} The Blocktree Abstract Data Type is the $6$-tuple $\mathrm{BT-ADT}=\lbrace A=\{\mathit{append}(b),\mathit{read}()/bc \in \mathcal{B}\}, B = \mathcal{BC}\cup\{\top,\bot\},Z=\mathcal{BT},\xi_0=b_0,\tau,\delta\rbrace$, 
where the transition function $\tau:Z\times A \rightarrow Z$ is defined by 
\begin{align*}
	\tau(bt,\mathit{read}()) &= bt\\
	\tau(bt,\mathit{append(b)}) &=\left\{\begin{array}{l}
	(V_{bt}\cup \{b\},E_{bt}\cup\{b,\mathit{last\_block}(f_a(bt))\}) \mathrm{\ if\ } P(f_a(bt)^\frown b)\\
	bt \mathrm{\ otherwise,}
	\end{array}\right.
\end{align*}
and where the output function $\delta: Z\times A \rightarrow B$ is defined by
\begin{align*}
	\delta(bt,\mathit{read}()) &= f_r(bt)\\
	\delta(bt,\mathit{append(b)}) &=\left\{\begin{array}{l}
	\top \mathrm{\ if\ } P(f_a(bt)^\frown b)\\
	\bot \mathrm{\ otherwise.}
	\end{array}\right.
\end{align*}
\end{definition} 
The read selection $f_r()$ takes as argument the blocktree and returns a chain, that is a sequence of blocks starting from the genesis block to a leaf block of the blocktree. The chain returned by the ${\sf read()}$ operation is called the blockchain. The append selection function $f_a()$ takes as argument the blocktree and returns a chain of blocks. Function $\mathit{last\_block}()$ takes as argument a chain of blocks and  returns the last appended block of the chain.

Predicate $P$ is an application-dependent predicate used to verify the validity of the chain obtained by appending the new block $b$ to the chain returned by $f_a()$. In Bitcoin for instance this predicate embeds the logic to verify that the obtained chain does not contain double spending or overspending transactions. 

Note that we do not need to add the validity check during the {\sf read} operation in the sequential specification, because in absence of concurrency the validity check during the {\sf append} operation is enough.   

\subsubsection{Concurrent specification and consistency criteria}
\label{sec:conc-spec-bt}
The concurrent specification of a blocktree abstract data type is the set of concurrent histories. A blocktree consistency  criterion is a function that returns the set of concurrent histories admissible for a blocktree abstract data type.  We define three consistency criteria for the blocktree: \emph{BT Eventual Prefix  consistency}, \emph{BT Strong Prefix consistency} and \emph{BT Eventual Strong consistency}. For ease of readability, we employ the following notations:

\begin{itemize}
	\item $E(a^*,r^*)$ is an infinite set containing an infinite number of {\sf append}$()$ and {\sf read}$()$ invocation and response events;
	\item $E(a,r^*)$ is an infinite set containing {\em (i)} a finite number of {\sf append}$()$ invocation and response events and {\em (ii)} an infinite number of {\sf read}$()$ invocation and response events;
	\item $o_{inv}$ and $o_{rsp}$ indicate respectively the invocation and response event of an operation $o$; and in particular for the {\sf read}$()$ operation, $r_{rsp}/bc$ denotes the returned blockchain $bc$ associated with the response event $r_{rsp}$ and for the {\sf append}$()$ operation $a_{inv}((b))$ denotes the invocation of the append operation having $b$ as input parameter; 
	\item ${\sf length}: \mathcal{BC}\rightarrow \mathbb{N}$ denotes a monotonic increasing deterministic function that takes as input a blockchain $bc$ and returns a natural number as length of $bc$. 
	Increasing monotonicity means that ${\sf length}(bc^\frown \{b\}) > {\sf length}(bc)$; 
	\item $bc \sqsubseteq bc^\prime$ iff $bc$ prefixes $bc^\prime$.
	\item $bc[i]$ refers to the $i-th$ block in the blockchain $bc$.
\end{itemize}

\subsubsection*{Eventual Prefix (EP) Consistency}

\begin{definition}[BT Eventual  Prefix Consistency (EP) criterion]
	A concurrent history $H=\langle \Sigma, E, \Lambda, \mapsto, \prec, \nearrow \rangle$, of a system that uses a BT-ADT, verifies the BT Eventual Prefix Consistency criterion if the following  properties hold:
	\begin{itemize}
\item {\bf Chain validity:} 
		
		 $\forall r_{rsp} \in E, P(r_{rsp}/bc)$.	
		 
		\textit{ Each returned chain is valid}.
\item {\bf Chain integrity:} 
		
		 $\forall r_{rsp} \in E,  \forall b \in r_{rsp}/bc: b\neq b_0, \exists a_{inv}(b) \in E, a_{inv}(b) \nearrow r_{rsp}$.	
	
	\textit{If a block different from the genesis block is returned, then an {\sf append} operation has been invoked with this block as parameter. This property is to avoid the situation in which {\sf reads} return blocks never {\sf appended}.}
\item {\bf Eventual prefix: }

 $\forall E\in E(a,r^*)\cup E(a^*,r^*),\forall r_{rsp}/bc, \forall i\in \mathbb{N}: bc[i]\neq \bot, \exists r'_{rsp}, \forall r''_{rsp}:  r'_{rsp} \nearrow r''_{rsp},  ((r'_{rsp}/bc)[i] = (r''_{rsp}/bc)[i])$.

\textit{In all the histories in which the number of {\sf read} invocations is infinite,  then for any non empty read chain position $i$, there exists a {\sf read} $r'/bc'$ from which all the subsequent {\sf reads} $r''/bc''$  will return the same block at position $i$, i.e. $bc'[i]=bc''[i]$.}
	\item {\bf Ever growing tree:} 

$\forall E\in E(a^*,r^*), \forall k\in\mathbb{N}, \exists r\in E:
\mathbf{length}(r_\mathit{rsp}/bc)> k.$
 
\textit{In all the histories in which the number of {\sf append} and {\sf read} invocations is infinite, for each length $k$, there exists a {\sf read} that returns a chain with length greater than $k$.
 This property avoids the trivial scenario in which the length of the chain remains unchanged despite the occurrence of an infinite number of {\sf append} operations. This can happen for instance if the tree is built as a star with infinite branches of bounded length.}
	\end{itemize}
\end{definition}

\subsubsection*{Strong  Prefix (SP) Consistency}

\begin{definition}[BT Strong  Prefix Consistency criterion ($SP$)]
	A concurrent history $H=\langle \Sigma, E, \Lambda, \mapsto, \prec, \nearrow \rangle$ of the system that uses a BT-ADT verifies the BT Strong Consistency criterion if Chain validity, Chain integrity, Ever growing tree (as defined for EP)  and the following property hold:
	\begin{itemize}

	\item {\bf Strong prefix: } 
	
		$\forall r_{rsp}, r'_{rsp} \in E^2,  (r'_{rsp}/bc' \sqsubseteq r_{rsp}/bc) \vee (r_{rsp}/bc \sqsubseteq r'_{rsp}/bc')$.
		
		\textit{For each pair of returned blockchains, one blockchain is the prefix of the other.}
		
	\end{itemize}
\end{definition}

\subsubsection*{Eventual Strong Prefix (ESP) Consistency}
\begin{definition}[BT Eventual Strong Prefix Consistency criterion ($ESP$)]
	A concurrent history $H=\langle \Sigma, E, \Lambda, \mapsto, \prec, \nearrow \rangle$ of the system that uses a BT-ADT verifies the BT Strong Consistency criterion if Chain validity, Chain integrity, Ever growing tree (as defined for EP)  and the following property hold:
	\begin{itemize}
	
	\item {\bf Eventual strong prefix: } 
	
 $\forall E\in E(a,r^*)\cup E(a^*,r^*)$, $\exists  r_{rsp}\in E, \forall r'_{rsp}, r''_{rsp} \in E^2: r_{rsp} \nearrow r'_{rsp}\wedge r_{rsp} \nearrow r''_{rsp},  (r''_{rsp}/bc' \sqsubseteq r'_{rsp}/bc) \vee (r'_{rsp}/bc \sqsubseteq r''_{rsp}/bc')$.
		
		\textit{In all histories with an infinite number of {\sf reads}, there exists a {\sf read} $r$ from which  for each pair of returned blockchains, one blockchain is the prefix of the other.}
		
	\end{itemize}
\end{definition}

\subsubsection*{Bounded  displacement} 
 Informally, the  bounded displacement says that for any two {\sf reads} $r/bc$ and $r'/bc'$ such that $r$ precedes $r'$, then by pruning the last  \emph{dis} blocks from bc the obtained  chain is a prefix of bc'. Note that constant \emph{dis} can be initially known or not.
 
\begin{definition}{Bounded displacement}
\begin{itemize}
    \item $\exists {\sf dis}\in\mathbb{N},\forall r_{rsp}, r'_{rsp}\in E: r_{rsp} \nearrow r'_{rsp}, \forall i\in\mathbb{N}: i\leq\mathbf{length}(r_{rsp}/bc)- {\sf dis}, (r_{rsp}/bc)[i]=(r'_{rsp}/bc')[i]$.\\
   
\end{itemize}

\end{definition}

We will show in the following that satisfying  the eventual prefix consistency criterion plus the bounded displacement property with known $dis$ boils down to getting  the strong prefix consistency criterion. This is because any algorithm implementing the read selection function $f_r$, can safely select the current chain but the last $dis$ blocks. Eventual prefix consistency criterion plus the known bounded displacement is designated in the sequel by \emph{known bounded displacement  eventual prefix consistency}. 

The eventual prefix consistency criterion plus the bounded displacement property with unknown $dis$ boils down to get the eventual strong prefix consistency criterion. In this case a simple algorithm implementing the {\sf read} operation always returns half of the selected longest chain. Since chains are always growing, the number of removed blocks increases up to reaching $dis$. 
In the sequel, eventual prefix consistency criterion plus the unknown bounded displacement is designated by \emph{unknown bounded displacement  eventual prefix consistency}. 

In the following section we prove the above-mentioned equivalences more formally and study relationships to known problems such as Consensus and Eventual Consensus to determine the assumptions on the system needed to implement blocktree consistency models.

\section{(Eventual) Consensus Reductions}
\label{sec:eventual_consensus_reductions}
In this Section we investigate the impact of the bounded displacement property combined with the eventual prefix consistency problems. In particular, we derive that when the displacement is known, such problem is equivalent to Consensus, while when unknown, this problem is stronger than Eventual Consensus~\cite{DGKPS15}.

\subsection{Known Bounded Displacement and Consensus}
\begin{restatable}{theorem}{eQEPKBandC}\label{th:eQEPKBandC}
Known bounded displacement eventual prefix  is equivalent to Consensus.
\end{restatable}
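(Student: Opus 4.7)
The plan is to prove the equivalence by passing through Strong Prefix consistency, which by~\cite{ADLPT19} is equivalent to Consensus. Both directions then reduce to an equivalence between known bounded displacement EP and Strong Prefix.

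\textbf{Direction Consensus $\Rightarrow$ known bounded displacement EP.} I would construct a blocktree implementation as a sequence of Consensus instances, one per chain position: at position $k$ every participant proposes a valid block extending the chain decided at position $k-1$, and the $k$-th Consensus decides the block at position $k$; reads return the current agreed chain. Because every correct process observes the same monotonically growing chain, bounded displacement holds with $\textsf{dis}=0$, while Chain validity, Chain integrity, Eventual prefix, and Ever-growing tree follow from the validity predicate $P$, Consensus agreement, and Consensus termination.

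\textbf{Direction known bounded displacement EP $\Rightarrow$ Consensus.} The main work is upgrading any implementation $\mathcal{A}$ of known bounded displacement EP with known constant $\textsf{dis}$ into a Strong Prefix implementation $\mathcal{A}'$. Let $\mathcal{A}'$ share $\mathcal{A}$'s append but modify its read to invoke $\mathcal{A}$'s read, obtain chain $bc$ of length $L$, and return $\widetilde{bc}=bc[1..\max(0,L-\textsf{dis})]$. I then verify the Strong Prefix properties on $\mathcal{A}'$: Chain validity and Chain integrity are inherited since a prefix of a valid chain is itself valid in the blockchain model, and Ever-growing tree survives truncation by a constant. For Strong prefix, take two wrapped reads $r_1, r_2$ with underlying chains $bc_1, bc_2$ of lengths $L_1, L_2$; assuming $r_1 \nearrow r_2$, bounded displacement gives $bc_1[i]=bc_2[i]$ for all $i\leq L_1-\textsf{dis}$, and a short case analysis on $\mathrm{sign}(L_1-L_2)$ yields $\widetilde{bc_1}\sqsubseteq\widetilde{bc_2}$ when $L_1\leq L_2$ and $\widetilde{bc_2}\sqsubseteq\widetilde{bc_1}$ otherwise (using that bounded displacement implicitly forces $L_2\geq L_1-\textsf{dis}$). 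The symmetric case $r_2 \nearrow r_1$ is analogous. Once Strong Prefix is established on $\mathcal{A}'$, I invoke the Strong Prefix $\Rightarrow$ Consensus reduction of~\cite{ADLPT19} to close the direction.

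\textbf{Main obstacle.} The delicate point is handling wrapped reads that are \emph{not} ordered by $\nearrow$ (concurrent reads across processes), since bounded displacement constrains only $\nearrow$-ordered pairs. I would resolve this by leveraging Ever-growing tree together with Eventual prefix: in any admissible history one can pin down a strictly later common read $r_3$ with $r_1\nearrow r_3$ and $r_2\nearrow r_3$ (one may even instrument the wrapper to issue an extra underlying read, which preserves all EP properties), and apply bounded displacement to both pairs to conclude that $\widetilde{bc_1}$ and $\widetilde{bc_2}$ are both prefixes of $\widetilde{bc_3}$ and hence prefix-comparable. A secondary check is that the blockchain predicate $P$ is hereditary under taking prefixes, which holds in the standard blockchain setting where $P$ is evaluated progressively along the chain.
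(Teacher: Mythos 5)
Your proposal is correct and follows essentially the same route as the paper: both directions pass through Strong Prefix (equivalent to Consensus by the cited prior work), with the hard direction obtained by wrapping the read of the EP protocol to prune the last $\mathsf{dis}$ blocks, and the converse being immediate since a Strong Prefix implementation satisfies bounded displacement with $\mathsf{dis}=0$. Your extra care with $\nearrow$-incomparable reads addresses a point the paper's own proof silently skips, and your fix (comparing both truncated chains against a common later read) is sound.
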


\begin{proof}
We first show how to solve strong prefix given a solution $\mathcal{P}_{EP}$ for known bounded displacement eventual prefix and then the reciprocal direction.  
Indeed, the equivalence between the strong prefix consistency criteria with Consensus is known from~\cite{btadt}.

Let us show that we can solve the strong prefix consistency criteria, known to be equivalent to Consensus~\cite{btadt}, using  $\mathcal{P}_{EP}$. 
To do so, we consider the following transformation from  the protocol $\mathcal{P}_{EP}$. To make an {\sf append}$()$ operation, processes simply use the {\sf append}$()$ operation provided by $\mathcal{P}_{EP}$. 
But, for the the {\sf read}$()$ operation, processes use the {\sf read}() operation provided by $\mathcal{P}_{EP}$ to obtain a chain and prune the last $dis$ blocks from it before returning the remaining chain. Note that if there are less
than $dis$ blocks, processes then return the genesis block.

Let us show that this modified protocol solves the  strong prefix consistency. For this, we need to show that the following properties are satisfied:
\begin{itemize}
    \item {\bf Chain validity:} The chain validity property is still satisfied by pruning $dis$ blocks.
    \item {\bf Chain integrity:} The chain integrity property is still satisfied by pruning $dis$ blocks.
    \item {\bf Strong prefix: } The  strong prefix property follows from the known bounded displacement and the removal of the last $dis$ blocks. Indeed, if we remove the last $dis$ blocks, then for any two {\sf read}$()$ operations, then the first {\sf read}$()$ returns a prefix of the second {\sf read}$()$ operation. 
	\item {\bf Ever growing tree:} The ever growing tree property is still satisfied by pruning $dis$ blocks.
\end{itemize}

For the other direction, we can build a solution to known bounded displacement eventual prefix using a solution for the strong prefix consistency criteria. This trivially solves the eventual known bounded displacement prefix property with $dis=0$.
\end{proof}

If we consider known bounded displacement eventual prefix then we have strong prefix (we can read a chain and remove the last $dis$ blocks to be sure that what we read is final).
Thus, since strong prefix requires Consensus and Consensus in impossible in an asynchronous system (with a single failure) then, known bounded displacement eventual prefix is impossible to solve in an asynchronous system.

\subsection{Unknown Bounded Displacement and Eventual Consensus}

\paragraph*{Eventual Strong prefix and Unknown Bounded Displacement.}
If we have an unknown bounded displacement, then eventual prefix, combined with such property is stronger than Eventual Consensus. To show that, we first show its equivalence with eventual strong prefix. Later we recall the Eventual Consensus problem (with a small modification of the Validity property, to make it suits to the blockchain context) and finally we show that unknown bounded displacement eventual prefix is stronger than Eventual Consensus.

\begin{restatable}{theorem}{UBDEPandESP}
Unknown bounded displacement eventual prefix is equivalent to eventual strong prefix.
\end{restatable}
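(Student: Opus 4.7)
My plan is to prove the equivalence by protocol-level reductions in both directions: in each direction I keep $\mathit{append}$ unchanged (preserving chain validity and chain integrity) and wrap only the $\mathit{read}$.

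For UBDEP $\Rightarrow$ ESP, I would use the transformation sketched just before the theorem: on $\mathit{read}$, invoke $\mathcal{P}_\mathit{UBDEP}.\mathit{read}()$ to obtain a chain $bc$ of length $L$ and return the prefix of $bc$ of length $\lfloor L/2\rfloor$. Ever growing tree carries through because halving an unboundedly growing length is still unbounded. For eventual strong prefix, let $dis$ be the (unknown but existing) displacement bound. In $E(a^*,r^*)$, pick any read $r$ with $L(r)\geq 3\,dis$ (exists by ever growing tree); bounded displacement forces $L(r')\geq 2\,dis$ for every later $r'$, so $\lfloor L(r')/2\rfloor\leq L(r')-dis$ and the returned prefix lies entirely inside the positions pinned for all future reads. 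Any two such outputs are therefore prefixes of a common chain $S$ and pairwise prefix-comparable. In $E(a,r^*)$ lengths are bounded by some $L^*$, and by eventual prefix all positions up to $L^*$ stabilize after finitely many reads; thereafter every $\lfloor L/2\rfloor$-prefix is a prefix of $S$ and again pairwise comparable.

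For ESP $\Rightarrow$ UBDEP, I would wrap $\mathcal{P}_\mathit{ESP}.\mathit{read}()$ with a per-process memo $longest$, initialised to the genesis chain: on each $\mathit{read}$, invoke $\mathcal{P}_\mathit{ESP}.\mathit{read}()$ to get $bc_\mathit{new}$, replace $longest$ by $bc_\mathit{new}$ iff $\mathbf{length}(bc_\mathit{new})>\mathbf{length}(longest)$, and return $longest$. By construction the returned length is monotone non-decreasing per process. After the ESP witness $r_\mathit{ESP}$, every output of $\mathcal{P}_\mathit{ESP}$ is a prefix of a single (possibly infinite) chain $S$; the first post-$r_\mathit{ESP}$ update that lengthens $longest$ carries it onto $S$, and all subsequent updates merely extend along $S$. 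This yields eventual prefix (each position $i$ eventually equals $S_i$ and stays so, or remains $\bot$ forever after the jump) and bounded displacement with a finite unknown $dis$: only finitely many updates of $longest$ can break prefix-comparability---the pre-$r_\mathit{ESP}$ updates plus the single jump onto $S$---so setting $dis$ to the maximum of $\mathbf{length}(C_1)-k^*(C_1,C_2)+1$ over this finite set of pairs (with $k^*$ the first position of disagreement) bounds every pair of returned chains. Ever growing tree transfers from $\mathcal{P}_\mathit{ESP}$'s own ever growing tree, since ever longer chains eventually overtake any current $longest$.

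The main obstacle is Direction 2: ESP imposes nothing on the pre-$r_\mathit{ESP}$ phase, so my bounded-displacement argument needs the set of reads $\nearrow$-before $r_\mathit{ESP}$ to be finite (standard when each event has finitely many $\nearrow$-predecessors) and needs the per-process memos of all participating processes to converge onto $S$; I would secure the latter by observing that ever growing tree forces every still-active process to eventually see chains exceeding its current $longest$ and hence adopt ever longer prefixes of $S$. A subtlety in Direction 1 is the $E(a,r^*)$ case, where ever growing tree is unavailable and I must lean on eventual prefix to stabilize all finitely many positions up to $L^*$ before extracting the $\lfloor L/2\rfloor$-prefix.
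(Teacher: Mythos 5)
Your first direction is essentially the paper's. The paper uses exactly the return-the-first-half transformation and argues that, since the pruned suffix grows without bound, eventually more than $dis$ blocks are removed and the outputs become pairwise prefix-comparable; your quantitative $3\,dis$ / $2\,dis$ bookkeeping and the separate treatment of $E(a,r^*)$ are a more careful rendering of the same idea, not a different one.

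Your second direction genuinely diverges, and this is where the substantive issue lies. The paper performs no transformation at all: it observes that eventual strong prefix immediately implies eventual prefix, defines ${\sf displacement}(b_1,b_2)$, and argues by contradiction that if no bound $dis$ existed there would be pairs of reads with arbitrarily large displacement, which it claims would violate eventual strong prefix. Your monotone-memo wrapper is heavier machinery, but it is motivated by a real concern the paper's one-line contradiction glosses over: ESP permits a later read to return a much \emph{shorter} (though prefix-comparable) chain, so a pair of post-witness reads can have large displacement even though ESP holds; your per-process memo eliminates exactly this within a single process. The price is the obligation you yourself flag and do not discharge: bounded displacement quantifies over pairs of reads at \emph{different} processes, and a process whose underlying ESP reads remain short keeps returning a short memo forever, making its displacement relative to a faster process's outputs unbounded. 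Your proposed repair --- that ever growing tree forces every still-active process to eventually see chains exceeding its current $longest$ --- does not follow from the definition, which only asserts the \emph{existence} of some read exceeding each length $k$, not that every process's reads do. So as written, Direction 2 has a concrete gap at the cross-process step; the paper's shorter argument avoids the wrapper entirely but rests on the same unstated assumption that returned chains do not shrink, so neither route fully closes this point.
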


\begin{proof}
Let $\mathcal{P}$ be a protocol solving the unknown bounded displacement eventual prefix and let us show that we can solve the eventual strong prefix property. 
To do so, we consider the following modification to the protocol $\mathcal{P}$. To make an {\sf append}$()$ operation, processes simply use the {\sf append}$()$ operation provided by $\mathcal{P}$. 
But, for {\sf read}$()$ operation, processes use the {\sf read}$()$ operation to obtain a chain and prune the second half of the returned chain before returning the remaining half of the chain. 

Let us show that this modified protocol solves the eventual strong prefix property. For this, we need to show that the following properties are satisfied:
\begin{itemize}
    \item {\bf Chain validity:} The chain validity property is still satisfied by pruning half the chain.
    \item {\bf Chain integrity:} The chain integrity property is still satisfied by pruning half the chain.
    \item {\bf Eventual Strong prefix: } The eventual strong prefix property follows from the unknown bounded displacement and the removal of the second half of the chain. Indeed, if we remove the second half of the chain, then eventually for any two {\sf read}$()$ operations, then the first {\sf read}$()$ returns a prefix of the second {\sf read}$()$ operation. Indeed, since we remove a growing number of blocks, eventually we remove more than $dis$ blocks and obtain chains such that one is the prefix of the other. 
	\item {\bf Ever growing tree:} The ever growing tree property is still satisfied by pruning half the chain.
\end{itemize}

For the other direction, let us consider a protocol $\mathcal{P}$ solving the eventual strong prefix property and let us show that it solves the unknown bounded displacement eventual prefix. The property of eventual strong prefix clearly implies the eventual prefix property. 
Let ${\sf displacement}(b_1, b_2)$ be the function that takes two blockchains $b_1$ and $b_2$ and returns the number of blocks needed to prune $b_1$ and obtain $b_1^\prime$ such that $b_1 ^\prime \sqsubseteq b_2$. 
Let us show that $\exists dis\in\mathbb{N},\forall r_{rsp},r_{rsp}'\in E^2,r\nearrow r', {\sf displacement}(r_{rsp}/bc,r_{rsp}'/bc)<dis$. 
Assume by contradiction that this property is not satisfied, then it implies that for any $dis$, there exists a couple of reads with a greater displacement than $dis$. This implies that the eventual strong prefix property is not satisfied which leads to a contradiction, hence eventual strong prefix property implies unknown bounded displacement eventual prefix.
\end{proof}

\paragraph*{Eventual Consensus.}
The eventual consensus~\cite{DGKPS15} (EC) abstraction captures eventual agreement among all participants. It exports, to every process $p_i$, operations ${\sf proposeEC_1}, {\sf proposeEC_2}, \dots$ that take multi-valued arguments (correct processes propose valid values) and return multi-valued responses.  Assuming that, for all $j \in \mathbb{N}$, every process invokes ${\sf proposeEC_j}$ as soon as it returns a response to ${\sf proposeEC_{j-1}}$, the abstraction guarantees that, in every admissible run, there exists $k \in \mathbb{N}$ and a predicate $P_{EC}()$, such that the following properties are satisfied:
\begin{itemize}
    \item {\bf EC-Termination} Every correct process eventually returns a response to ${\sf proposeEC_j}$ for all $j \in \mathbb{N}$.
    \item {\bf EC-Integrity} No process responds twice to ${\sf proposeEC_j}$ for all $j \in \mathbb{N}$.
    \item {\bf EC-Validity} Every value returned to ${\sf proposeEC_j}$ is valid with respect to $P_{EC}()$.
    \item {\bf EC-Agreement} No two correct processes return different values to ${\sf proposeEC_j}$ for all $j \geq k$.
\end{itemize}

\begin{restatable}{theorem}{eQEPUBandEC}\label{th:eQEPUBandEC}
Eventual strong prefix is stronger than Eventual Consensus.
\end{restatable}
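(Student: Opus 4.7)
The plan is to build a protocol solving Eventual Consensus from any protocol $\mathcal{P}_{ESP}$ solving eventual strong prefix, by encoding each consensus instance's decision as the block sitting at a specific position of the eventually-agreed chain. Ever growing tree will provide progression through instances, while eventual strong prefix will provide the final agreement on the decided values.

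Concretely, I would implement ${\sf proposeEC}_j(v)$ as follows: the invoking process first calls $\mathcal{P}_{ESP}.{\sf append}(b)$ on a block $b$ carrying $v$ (padded with whatever the blocktree predicate $P$ demands), then repeatedly invokes $\mathcal{P}_{ESP}.{\sf read}()$ until the returned chain $bc$ satisfies ${\sf length}(bc)\geq j$, and finally returns the value carried by $bc[j]$. I would take $P_{EC}$ to be ``the value is the content of a block accepted by the blocktree validity predicate $P$'', which is exactly what chain validity and chain integrity enforce on every block sitting inside any returned chain.

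I would then verify the four EC properties. EC-Termination follows from ever growing tree applied to the infinite history generated by all processes continuously proposing (hence continuously appending). EC-Integrity holds by construction, since each ${\sf proposeEC}_j$ call produces a single response. EC-Validity follows directly from chain validity and chain integrity applied at position $j$ of the chain returned by the internal read. The interesting case is EC-Agreement, which I would derive from eventual strong prefix: let $r$ be the witnessing read from the ESP definition; since only finitely many EC instances can have completed at any process before $r$, there is a uniform threshold $k$ such that for every $j\geq k$ and every process, the read performed inside ${\sf proposeEC}_j$ happens after $r$; all such reads return pairwise prefix-comparable chains, and any two prefix-comparable chains of length at least $j$ agree on position $j$, yielding the required per-instance agreement.

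The main obstacle is precisely this last step: converting the ``eventually pairwise prefix-comparable'' guarantee of ESP into the ``eventually per-instance agreeing'' guarantee of EC. The essential ingredient is that the EC protocol is run lock-step, with instance $j{+}1$ starting only after instance $j$ returns, so the fixed witnessing read $r$ can be preceded by only finitely many EC rounds at any given process, which is what allows the uniform threshold $k$ demanded by the EC specification. Once this alignment between the two notions of ``eventually'' is in place, the remaining verifications are bookkeeping, and the ``stronger than'' direction follows; the separation (EC does not imply ESP) is not needed to establish the theorem as stated and is morally clear since ESP constrains the entire evolving chain while EC only constrains independent decisions.
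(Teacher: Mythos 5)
Your construction is essentially the paper's: each ${\sf proposeEC}_j$ appends the proposed value, reads until position $j$ of the returned chain is populated, and decides from that position, with EC-Termination from ever growing tree, EC-Validity from chain validity with $P=P_{EC}$, and EC-Agreement from the eventual strong prefix witness read. Your explicit argument that the lock-step structure yields a uniform threshold $k$ (only finitely many rounds precede the witnessing read at each process) is a slightly more careful rendering of a step the paper leaves implicit, but the route is the same.
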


\begin{proof}
We show that it exists a protocol $\mathcal{P_{EC}}$ to solve Eventual Consensus  starting from a protocol $\mathcal{P_{ESP}}$ for eventual strong prefix. We do the transformation as follows. Every correct process $p$ invokes ${\sf proposeEC_j}$ for all $j \in \mathbb{N}$. We impose that the validity predicate $P()$ of the blocktree ADT (see Section~\ref{sec:definitions}) be equal to predicate $P_{EC}()$. When a correct process invokes the ${\sf proposeEC_j}(v)$ operation of $\mathcal{P_{EC}}$, for any $j \in \mathbb{N}$, then it directly invokes the {\sf append}$(v)$ operation of $\mathcal{P_{ESP}}$. After that, $p$ invokes a sequence of ${\sf read}()$ operations up to the moment it returns a chain $bc$ such that $bc[j] \neq \bot$. Process $p$  then returns chain $bc$  as decision for ${\sf proposeEC_j}(v)$ and triggers the next operation ${\sf proposeEC_{j+1}}(v')$. 

Let us show that  protocol  $\mathcal{P_{EC}}$ solves the Eventual Consensus.
\begin{itemize}
    \item {\bf EC-Termination} This property is guaranteed by the ever growing tree property.
    \item {\bf EC-Integrity} This property follows directly from the transformation.
    \item {\bf EC-Validity} This property follows by construction and the chain validity property, since $P()=P_{EC}()$.
    \item {\bf EC-Agreement} This property follows by the eventual strong prefix property, that guarantees that there exists a ${\sf read}()$ operation $r$ such that, all the subsequent ones, return blockchains that are each prefix of the other one. In other words, eventually there is agreement on the value contained in $bc[j]$. This implies that there exists $k$ for which all ${\sf proposeEC_j}$ with $j>k$ returns the same value to all correct processes.
\end{itemize}
 \end{proof}

\section{Eventual Prefix solutions}
\label{sec:eventual_prefix_solutions}
In this section we assume that processes are equipped with an Oracle $\Theta_P$ or $\Theta_{F,k}$ (as defined in~\cite{btadt}) to access the blockchain, i.e., to read its status ($\mathit{getValidBlock}()$ operation) and append a new block to it ($\mathit{setValidBlock}()$ operation). Briefly, oracles are abstraction that encapsulate the creation of valid blocks and since only valid blocks can be appended to the blocktree, it follows that
oracles grant the access to the blocktree. Oracle  $\Theta_{F,k}$  also owns a synchronization
power to control the number of forks ($k$), in terms of branches of the blocktree from a given block. On the other hand $\Theta_P$ poses no constraints on the number of blocks that can be appended to the same block, i.e., it allows forks and it poses no bound on the number of blocks in a fork (more details can be found in in~\cite{btadt}).


\subsection{Impossibility of Eventual Prefix with longest chain rule}


In the following we prove that we cannot provide eventual prefix consistency in the shared-memory model if the append selection function follows the the longest chain rule even when equipped with $\Theta_{F,2}$.

We consider a crash-prone asynchronous system with a finite number of processes and an append selection function $f$ equipped with $\Theta_{F,2}$ that deterministically selects a chain with the longest chain, and in case of a tie  selects a chain based on the lexicographical order. We show that it is impossible to implement a blockchain that satisfies  eventual prefix consistency for this append selection function. 

Note that such a selection function is used by many blockchain systems. In particular in proof-of-work systems such as Bitcoin, chains are selected as the longest chain while in Ethereum chains are selected using the chain with greatest weight, both captured by the selection of chains according to the longest chain. 

\tikzset{
	mystyle/.style={
		circle,
		inner sep=1pt,
		minimum size=1cm,
		thick,
		align=center,
		draw=black,
	}
}

\tikzset{
	mystyle2/.style={
		draw=white,
	}
}

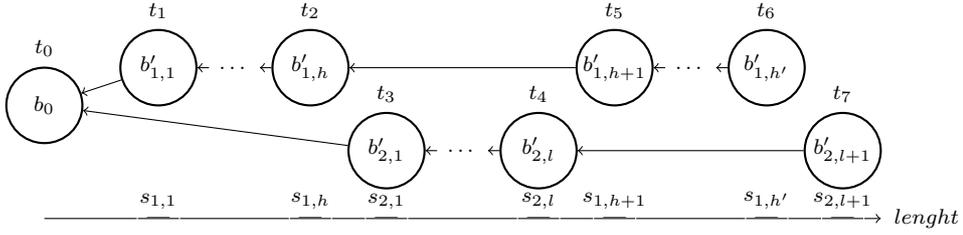
\begin{figure}
	\begin{tikzpicture}	
	\node [mystyle] (b0) [label=\scriptsize $t_0$] at (3,3.5) {\scriptsize $b_0$};
	
	\node [mystyle] (p11) [label=\scriptsize $t_1$] at (4.5,4) {\scriptsize $b^\prime_{1,1}$};
	\node [mystyle2] (p11dots) at (5.5,4) {\scriptsize $\dots$};	
	\node [mystyle] (p12) [label=\scriptsize $t_2$] at (6.5,4) {\scriptsize $b^\prime_{1,h}$};
	\node [mystyle] (p13) [label=\scriptsize $t_5$] at (10.5,4) {\scriptsize $b^\prime_{1,h+1}$};
	\node [mystyle2] (p13dots) at (11.5,4) {\scriptsize $\dots$};	
	\node [mystyle] (p14) [label=\scriptsize $t_6$] at (12.5,4) {\scriptsize $b^\prime_{1,h'}$};
	
	\node [mystyle] (p21) [label=\scriptsize $t_3$] at (7.5,2.9) {\scriptsize $b^\prime_{2,1}$};
	\node [mystyle2] (p21dots) at (8.5,2.9) {\scriptsize $\dots$};
	\node [mystyle] (p22) [label=\scriptsize $t_4$] at (9.5,2.9) {\scriptsize $b^\prime_{2,l}$};
	\node [mystyle] (p23) [label=\scriptsize $t_7$] at (13.5,2.9) {\scriptsize $b^\prime_{2,l+1}$};

	\path[<-] (b0) edge (p11) (p11) edge (p11dots)  (p11dots) edge (p12) (p12) edge (p13) (p13) edge (p13dots) (p13dots) edge (p14);	
	\draw[<-] (b0) edge (p21)  (p21)edge (p21dots) (p21dots) edge (p22)  (p22)edge(p23);

    \draw[->] (3, 2) -- (14, 2); \node[mystyle2] at (14.6,2) {\scriptsize$lenght$};
    \node[mystyle2] () [label={[yshift=-.2cm]\scriptsize $s_{1,1}$}] at (4.5, 2) {\scriptsize |};
	\node[mystyle2] () [label={[yshift=-.2cm]\scriptsize $s_{1,h}$}] at (6.5, 2) {\scriptsize |};
	\node[mystyle2] () [label={[yshift=-.2cm]\scriptsize $s_{2,1}$}] at (7.5, 2) {\scriptsize |};
	\node[mystyle2] () [label={[yshift=-.2cm]\scriptsize $s_{2,l}$}] at (9.5, 2) {\scriptsize |};
	\node[mystyle2] () [label={[yshift=-.2cm]\scriptsize $s_{1,h+1}$}] at (10.5, 2) {\scriptsize |};
	\node[mystyle2] () [label={[yshift=-.2cm]\scriptsize $s_{1,h'}$}] at (12.5, 2) {\scriptsize |};
	\node[mystyle2] () [label={[yshift=-.2cm]\scriptsize $s_{2,l+1}$}] at (13.5, 2) {\scriptsize |};
	\end{tikzpicture}
	\caption{\label{fig:proof}A blocktree generated by two processes. On the x-axis the longest chain value of each chain at different time instants (from the root to the current leaf) and the relationships between those values. 
	}
\end{figure}

\begin{theorem}
It is impossible to build a blockchain that guarantees eventual prefix consistency when the {\sf append} operation is based on the longest chain rule and is equipped with $\Theta_{F,2}$.
\end{theorem}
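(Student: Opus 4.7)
The plan is to proceed by contradiction, building the adversarial execution sketched in Figure~\ref{fig:proof}. Suppose some algorithm $\mathcal{A}$ equipped with $\Theta_{F,2}$ and using the longest-chain rule for $f_a$ satisfies eventual prefix consistency. I would exhibit an infinite adversarial asynchronous execution involving two correct processes $p_1$ and $p_2$ in which the block at position~$1$ of the chains returned by {\sf read}s never stabilizes, contradicting eventual prefix.

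First, I would exploit asynchronous scheduling to bootstrap two branches. By delaying the propagation of each process's appends to the other, I would schedule $p_1$ and $p_2$ to each invoke {\sf append} while only the genesis $b_0$ is visible in its local view. Thus $p_1$ appends $b'_{1,1}$ and $p_2$ appends $b'_{2,1}$, both as children of $b_0$; the oracle $\Theta_{F,2}$ allows this fork since at most two branches are created. I would then extend the execution inductively, maintaining the invariant that at every round the local view of $p_i$ contains only the blocks of branch $i$, with every step that would reveal a block of the opposite branch postponed by the adversary. Under this invariant, every {\sf append} by $p_i$ applies $f_a$ on a local view where branch $i$ is the unique, hence longest, chain, so $p_i$ keeps extending its own branch; both branches therefore grow without bound and the ever-growing-tree property is satisfied.

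For the contradiction, I would interleave infinitely many {\sf read}s from $p_1$ and $p_2$. At each {\sf read} by $p_i$, the only non-genesis blocks visible locally lie on branch $i$, so any chain returned by $f_r$ must have $b'_{i,1}$ at position~$1$. The execution lies in $E(a^*,r^*)$, hence eventual prefix must apply; yet for any candidate stabilizing read $r'_{rsp}$ postulated by the property, one can always exhibit later reads from $p_1$ and $p_2$ whose returned chains disagree at position~$1$. This contradicts eventual prefix.

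The main obstacle is closing the loophole that $f_r$ might sidestep the argument by returning arbitrarily short chains (e.g.\ always just $b_0$), which would trivially agree at every defined position. This is precisely ruled out by the ever-growing-tree property: $f_r$ must return chains of unbounded length, and in $p_i$'s local view every chain of length at least~$1$ goes through $b'_{i,1}$. A secondary subtlety is arguing that the asynchronous adversary retains enough power across infinitely many rounds to keep postponing the steps that would cross-inform the two processes; this is a standard consequence of asynchrony, since at each finite stage only finitely many new operations need to be further delayed.
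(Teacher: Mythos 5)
There is a genuine gap. Your construction rests on the invariant that, for the whole infinite execution, ``the local view of $p_i$ contains only the blocks of branch $i$.'' That is not an admissible execution in the model used here. The oracle calls $\mathit{getValidBlock}/\mathit{setValidBlock}$ act on a shared blocktree: for branch $2$ to grow, $p_2$'s $\mathit{setValidBlock}$ operations must complete, and once they complete every subsequent $\mathit{update\_view}()$ by $p_1$ sees those blocks (likewise, in message passing, asynchrony only permits finite, not perpetual, postponement of delivery). So you cannot simultaneously have both branches growing without bound and each process forever blind to the other branch; what you describe is a permanent partition, and eventual prefix consistency is not required to hold under one. The secondary ``subtlety'' you wave at --- that only finitely many operations need to be further delayed at each stage --- is exactly where the argument breaks: across the infinite execution you are suppressing infinitely many completed writes forever.

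The tell that something is wrong is that your argument never uses the longest-chain rule in any essential way: under your disjoint-views invariant, \emph{any} deterministic $f_a$ would keep each process extending its own branch, so the same reasoning would ``prove'' impossibility for the lowest-identifier selection function --- which the paper shows in Section~\ref{sec:eventual_prefix_solutions} actually \emph{solves} eventual prefix in an asynchronous system. The paper's proof instead lets the processes genuinely see each other's branches and exploits the longest-chain rule itself: the adversary times the view updates so that whenever $p_i$ applies $f_a$, its own branch is the longest in its (slightly stale but honest) view, with the lengths arranged so that the globally longest branch \emph{alternates} ($s_{1,h}<s_{2,1}$, $s_{1,h+1}>s_{2,1}$, $s_{2,l}<s_{1,h+1}$, $s_{2,l+1}>s_{1,h+1}$, \dots). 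It is this alternation --- forced by the rule once both branches are visible --- that makes reads flip between the two branches and keeps the common prefix stuck at $b_0$. To repair your proof you would need to replace the disjoint-views invariant with this weaker, achievable one, and argue what $f_a$ selects once a process's view contains both branches. Your observation that the ever-growing-tree property closes the ``always return $b_0$'' loophole for $f_r$ is a nice touch that the paper leaves implicit, but it does not rescue the construction.
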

\begin{proof}
Intuitively, the impossibility follows from the fact that with the longest chain selection, races can occur between different branches in the tree. We show that as forks may occur, we can create two infinite branches sharing only the root. Alternatively, one or the other branch constitutes the longest chain and {\sf append} operations selects chains from each branch alternatively. This is enough to show that the only common prefix that is returned is the root, hence, violating the eventual prefix consistency. 

Let $p_1$ and $p_2$ be two processes trying to {\sf append} infinitely many blocks. 

At time~$t_0$, for both~$p_1$ and~$p_2$, the $\mathit{update\_view}$ of $bt$ equals $b_0$, thus both call $\mathit{getValidBlock}(b_0, b_{i,1})=b^\prime_i$, where $i=1$ for $p_1$ and $i=2$ for $p_2$. At time~$t_1 > t_0$, $p_1$ and $p_2$ are poised to call $\mathit{setValidBlock}(b_0, b^\prime_{i,1})$. Let $p_2$ be slower than $p_1$, thus $p_1$ cannot distinguish whether $p_2$ is slow or crashed. Process~$p_1$ gets the set $\{b^\prime_{1,1}\}$ as result of the function and proceeds to {\sf append}  a new block $b_{1,2}$, i.e., it updates $bt$'s view and applies $f$ on it to select the leaf where to {\sf append}  the new block, in this case the only leaf is $b^\prime_{1,1}$. $p_1$ calls $\mathit{getValidBlock}(b^\prime_{1,1}, b_{1,2})$ which returns $\{b^\prime_{1,2}\}$ and hereafter at time $t_2>t_1$ it is poised to call $\mathit{setValidBlock}(b^\prime_{1,1}, b^\prime_{1,2})$. 

We can continue in the same way up to time $t_3 \geq t_2$, the moment in which the function $\mathit{setValidBlock}(b_0, b^\prime_{2,1})$ returns $\{b^\prime_{1,1},b^\prime_{2,1}\}$ to~$p_2$. We assume that at time $t_3$, $p_1$ built a chain from the root $b_0$ to $b^\prime_{1,h}$ and is poised to {\sf append}  $b^\prime_{1,h+1}$. The chain from $b_0$ to $b^\prime_{1,h}$ has a length $s_{1,h}$ and the chain $b_0$ to $b^\prime_{2,1}$ has a length~$s_{2,1}$. W.l.o.g., we assume that $s_{1,h}<s_{2,1}$ and~$s_{1,h+1}>s_{2,1}$. At time $t_4>t_3$, $p_2$ updates its view of $bt$ and $f$ applied to it selects the chain from $b_0$ to $b^\prime_{2,1}$, as $s_{1,h}<s_{2,1}$, to {\sf append}  the new block. We can make it {\sf append}  blocks until it is poised to call a $\mathit{setValidBlock}(b^\prime_{2,l}, b^\prime_{2,l+1})$ such that $s_{2,l}<s_{1,h+1}$ and~$s_{2,l+1}>s_{2,h+1}$ (cf. Figure~\ref{fig:proof}). 
We can continue to create two infinite branches sharing only the root. A {\sf read} operation can alternatively return a chain from each branch, depending on the time (cf. Figure~\ref{fig:proof}), e.g., from $b_0$ to $b^\prime_{1,h}$ at time $t_2$ and from $b_0$ to $b^\prime_{2,l}$ at time $t_4$. Thus, the common prefix never increases, and so,  eventual prefix consistency is not  satisfied.
\end{proof}

\subsection{Asynchronous solution to eventual prefix with an unbounded number of Byzantine processes}
We consider an asynchronous system with a possibly infinite set of processes which can append infinitely many blocks, and processes can be affected by Byzantine failures. Each process has a unique identifier that can be verified using signatures. We show that in that setting it is possible to build a blockchain that satisfies eventual prefix consistency. This is done by using the oracle $\Theta_P$ which emulates a shared-memory model.

\begin{algorithm}
\DontPrintSemicolon 
\KwIn{Blockchain \textit{bt}}
$\mathit{bt}.\mathit{update\_view}()$\;
\If{$\Theta_P.\mathit{getValidBlock}(f_a(bt))$}{
	$\Theta_P.\mathit{setValidBlock}(f_a(bt))$\;
\Return{$\top$}
}
\Else{
\Return{$\bot$}
}
\caption{Append procedure}
\end{algorithm}

\begin{algorithm}
\DontPrintSemicolon 
\KwIn{Blockchain \textit{bt}}
$\mathit{bt}.\mathit{update\_view}()$\;
\Return{$f_r(\mathit{bt})$}
\caption{Read procedure}
\end{algorithm}

\paragraph*{Algorithm}

The main idea of the algorithm consists on using a local selection function for append operations.

The append select function $f_r$ selects a chain by going from the root to a leaf, choosing at each fork the edge to the child with the lowest identifier, given some arbitrary total order on blocks. 
The read selection function $f_r$ is the same as $f_a$.

To perform an {\sf append()} operation, processes update the current view of the $bt$'s state  and apply the $\mathit{getValidBlock}$ operation of $\Theta_P$ for the last block in $f_a(bt)$. Finally, they apply the $\mathit{setValidBlock}$ to append it to $bt$.  For {\sf read()} operations, processes return the chain selected by $f_r$ on the  $bt$ returned by $update\_view $.

\paragraph*{Proof sketch}
Chain validity and chain integrity properties are satisfied by Oracle $\Theta_P$. The ever-growing tree property relies on the fact that each fork has a finite number of blocks since there are finitely many processes and each (Byzantine or correct) process can contribute with at most one block per parent as multiple children created by the same process are ignored by $f_a$. Thus, eventually, new blocks contribute to the growth of the tree. The eventual prefix property follows from the definition of $f_a$. 
For any $b$ in $bt$, let $t_b$ be the time after which no process contributes to the same block $b$. All correct processes trigger the append operation on chains containing the non-ignored child with the longest chain. By induction, it creates an ever-growing prefix selected by $f_a$. As read chains are selected according to the append selection function, it provides the eventual prefix property.

\subsection{Eventual synchronous solution for Unknown Bounded Displacement Eventual Prefix}
In this Section we discuss the impossibilities and possibilities of unknown bounded displacement eventual prefix consistency. In particular, we show that it is unsolvable in an asynchronous shared-memory or message-passing system if at least one process fails with Byzantine failures. Interestingly, if we consider the previous result with known bounded displacement, the impossibility holds even in an eventually synchronous system model with more than one third Byzantine processes (with not surprise, given the equivalence with the Consensus problem, see Theorem~\ref{th:eQEPKBandC}). Finally we prove that unknown bounded displacement eventual prefix consistency is solvable in an eventual synchronous message-passing system with a majority of correct processes. 

\begin{theorem}
There does not exist any solution that solves unknown bounded displacement eventual prefix in an asynchronous system with at least one Byzantine process.
\end{theorem}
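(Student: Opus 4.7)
The plan is to invoke the equivalence just proved between unknown bounded displacement eventual prefix and eventual strong prefix (ESP). With that in hand, it suffices to show that ESP is unsolvable in an asynchronous system with at least one Byzantine process, and I would argue this by contradiction: assume a protocol $\mathcal{P}$ solves ESP in that setting.

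First, I would set up a Byzantine equivocation scenario. Let $p_B$ be the Byzantine process; have $p_B$ produce two distinct children $b_A$ and $b_B$ of the genesis block, both satisfying the validity predicate $P$, and keep extending both branches with an infinite stream of valid blocks. Because validity is local to block formation, a correct process that has only been delivered messages about one of the two children has no local evidence of equivocation: each individual block is well-formed and each branch looks like the honest output of a late or slow proposer.

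Next, I would run a partition/indistinguishability argument. For any read $r^\ast$ that $\mathcal{P}$ might designate as the ESP-stabilization read in some reference execution, I would construct an extension in which the adversarial scheduler delivers only the messages pertaining to branch $A$ to some correct process $p_1$, and only the messages pertaining to branch $B$ to another correct process $p_2$, up until each of them performs a read strictly later in the program order than $r^\ast$. From the local view of $p_1$, the execution is indistinguishable from one in which $p_B$ is an honest process that simply built the $A$-chain (with $B$-messages arbitrarily delayed by asynchrony), and symmetrically for $p_2$. By \textbf{Chain validity} and \textbf{Chain integrity}, the reads of $p_1$ and $p_2$ must each return a chain consistent with their respective local views, hence chains passing through $b_A$ and through $b_B$ respectively. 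Since $b_A \ne b_B$, these chains share only the genesis as common prefix, so neither is a prefix of the other, contradicting eventual strong prefix after $r^\ast$.

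The main obstacle will be making the indistinguishability rigorous in the Byzantine model, in particular verifying that signatures and any cryptographic evidence available at the BT-ADT layer cannot be leveraged by $\mathcal{P}$ to detect $p_B$'s equivocation from a one-sided local view: the two signed children $b_A, b_B$ of the genesis are each on their own consistent with an honest late-proposing $p_B$, so neither $p_1$ nor $p_2$ holds a proof of misbehavior at the moment of its read. A secondary subtlety is to schedule the two reads at the right moments, strictly after $r^\ast$ but before any cross-branch message reconciles the partitioned views; this is always achievable under fully asynchronous message delays, which completes the contradiction.
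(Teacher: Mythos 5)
Your first move---reducing via the equivalence with eventual strong prefix---is consistent with the paper, but the paper then continues the reduction chain (eventual strong prefix is stronger than Eventual Consensus by Theorem~\ref{th:eQEPUBandEC}, Eventual Consensus is equivalent to eventual leader election by~\cite{DGKPS15}, and the latter is impossible in an asynchronous system with one Byzantine process by~\cite{R07}), whereas you attempt a direct indistinguishability argument against ESP. That direct argument has a genuine gap: nothing in Chain validity or Chain integrity forces $p_1$'s read to return a chain passing through $b_A$. Those properties only constrain what \emph{may} be returned (returned blocks must be valid and previously appended); they never oblige the protocol to expose a block it has seen. A protocol is free to return a short chain, to wait for cross-validation from other correct processes, or to break the $b_A$/$b_B$ tie deterministically (indeed, the paper's own asynchronous eventual-prefix algorithm selects the child with the lowest identifier, so once $p_1$ and $p_2$ have both seen both children---which eventual message delivery guarantees---they converge on the same branch). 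A single equivocation by one Byzantine process therefore does not produce the permanent disagreement you need.

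There is also a quantifier problem. Eventual strong prefix is violated only if there is a \emph{single} infinite history in which \emph{no} stabilization read exists; building a separate extension for each candidate $r^\ast$ does not yield such a history without a diagonalization, and that diagonalization must cope with the fact that in an asynchronous system with reliable channels the partitioned views are eventually reconciled, so the adversary must manufacture fresh confusion infinitely often while remaining indistinguishable from honest behavior. This is precisely the difficulty the paper sidesteps by delegating the hard impossibility to the known result on Byzantine eventual leader election. To repair your proof you would either need to carry out that recurring-confusion construction in full, or follow the paper's route through Eventual Consensus.
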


\begin{proof}
The proof follows from the relationship between the unknown bounded displacement eventual prefix which is stronger than the Eventual Consensus problem (cf. Theorem \ref{th:eQEPUBandEC}), which is equivalent to the Leader Election problem \cite{DGKPS15} which cannot be solved in an asynchronous system with at least one Byzantine processes~\cite{R07}.
\end{proof}

\begin{theorem}
There does not exist any solution that solves known bounded displacement eventual prefix in an eventual synchronous system with more than one third of Byzantine faulty processes.
\end{theorem}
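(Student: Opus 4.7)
The plan is to obtain the impossibility as a direct corollary of Theorem~\ref{th:eQEPKBandC}, which establishes that known bounded displacement eventual prefix is equivalent to Consensus, combined with the classical $n > 3f$ lower bound for Byzantine Consensus under eventual synchrony (Dwork, Lynch and Stockmeyer). Concretely, I would assume for contradiction that there exists a protocol $\mathcal{P}_{EP}$ solving known bounded displacement eventual prefix in an eventually synchronous system tolerating $f \geq n/3$ Byzantine processes.

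The first step is to invoke the forward direction of Theorem~\ref{th:eQEPKBandC}: from $\mathcal{P}_{EP}$ and the known constant $\mathit{dis}$, one derives a protocol $\mathcal{P}_{SP}$ for strong prefix consistency by wrapping each {\sf read}$()$ so that it prunes the last $\mathit{dis}$ blocks. Next, I would recall (as the paper does through the citation to~\cite{btadt}) that strong prefix consistency is itself equivalent to Consensus, so that $\mathcal{P}_{SP}$ yields a Byzantine Consensus protocol in the same eventually synchronous model with the same resilience threshold $f \geq n/3$.

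The final step is to contradict this with the well-known lower bound of Dwork, Lynch and Stockmeyer, stating that no Byzantine Consensus protocol exists in an eventually synchronous system when the number of Byzantine processes is at least $n/3$. This contradiction establishes the theorem. The main obstacle is minor and purely expositional: care must be taken that the reduction in Theorem~\ref{th:eQEPKBandC} does not introduce any additional synchronization or resilience assumption beyond what $\mathcal{P}_{EP}$ already uses, so that the derived Consensus protocol genuinely lives in the same eventually synchronous model with the same Byzantine bound. Since the reduction is purely a local post-processing of {\sf read}$()$ outputs (pruning a fixed number of blocks), this is immediate and no additional communication or timing assumption is required, so the contradiction with the $n > 3f$ bound goes through cleanly.
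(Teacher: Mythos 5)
Your proposal is correct and follows essentially the same route as the paper: reduce via Theorem~\ref{th:eQEPKBandC} (known bounded displacement eventual prefix $\equiv$ Consensus) and then invoke the classical $n>3f$ impossibility for Byzantine Consensus, the paper citing Lamport--Shostak--Pease for the synchronous case (which subsumes eventual synchrony) where you cite Dwork--Lynch--Stockmeyer directly. The only difference is that you unpack the intermediate step through strong prefix explicitly, which the paper leaves implicit in the cited theorem.
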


\begin{proof}
The proof follows from the equivalence between known bounded displacement eventual prefix and Consensus (cf. Theorem \ref{th:eQEPKBandC}), which is unsolvable in a synchronous (and thus also in an eventually synchronous) system with more than one third of Byzantine faulty process~\cite{LSP82}.
\end{proof}


\begin{theorem}\label{thm:UBDEPsolution}
There exists a solution that solves unknown bounded displacement eventual prefix in an eventual synchronous system with a majority of correct processes.
\end{theorem}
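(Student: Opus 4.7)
The plan is to build the required algorithm by composing a sequence of eventual-consensus instances, one per chain position, and then invoke Theorem \ref{th:eQEPUBandEC} to convert the resulting eventual strong prefix into unknown bounded displacement eventual prefix. Concretely, all correct processes run instances $\mathsf{proposeEC_1}, \mathsf{proposeEC_2}, \dots$ sequentially, where $\mathsf{proposeEC_k}$ fixes the $k$-th block of the chain. A correct process invoking $\mathsf{append}(b)$ submits $b$ as its candidate to the next instance for which $P$ applied to the so-far-decided chain extended by $b$ holds, and we set $P_{EC}$ equal to this ``valid extension'' predicate so that EC-Validity enforces chain validity. A $\mathsf{read}()$ simply returns the locally-known prefix of decided blocks.

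I would then verify the four properties of eventual strong prefix. Chain validity and chain integrity follow from EC-Validity together with the choice of $P_{EC}$; ever-growing tree follows from EC-Termination, which guarantees that every correct process eventually exits every instance and triggers the next one; and eventual strong prefix follows from EC-Agreement, which provides an index $k$ beyond which all correct processes decide the same value per instance, so that any read issued after every correct process has completed the first $k$ instances returns a chain that agrees with every subsequent read on at least the first $k$ positions. Applying Theorem \ref{th:eQEPUBandEC} then delivers unknown bounded displacement eventual prefix, since eventual strong prefix implies it via the equivalence already proved.

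The remaining task, and the main obstacle, is to implement eventual consensus in an eventually synchronous message-passing system with a majority of correct processes. I would use a rotating-coordinator protocol in the spirit of eventual leader election for Byzantine processes: coordinators are scheduled in round-robin, each coordinator broadcasts a signed proposal, processes endorse a received proposal with their own signatures, and a value is decided once more than $n/2$ signatures are collected. Before GST, any combination of asynchrony and Byzantine coordinators may yield disagreement, but this only pushes out the index $k$ promised by EC-Agreement; after GST, the schedule eventually lands on a correct coordinator whose proposal reaches every correct process within the bounded delay, and it garners at least the $c > n/2$ honest endorsements, stabilizing the decisions. The hard part is precisely that with only a majority of correct processes the intersection of two quorums of size $>n/2$ may consist entirely of Byzantine processes, so classical BFT safety arguments do not transfer. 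I would circumvent this by leaning on the fact that eventual consensus, unlike consensus, tolerates inconsistent decisions during the ``stabilization'' phase: safety need only hold beyond $k$, and there it follows from post-GST timing (a correct coordinator's proposal arrives before any conflicting one can accrue enough endorsements) together with the unforgeability of signatures, rather than from any quorum-intersection property.
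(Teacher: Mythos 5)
Your proposal has two genuine gaps. First, the reduction from eventual consensus to eventual strong prefix does not work as stated. EC-Agreement only guarantees agreement on instances $j \geq k$; for positions $1,\dots,k-1$ different correct processes may decide, and forever return, different blocks. Since your reads return the locally decided prefix and those early positions are never revisited, two processes that disagree at position $1$ will never return chains that are prefixes of one another, so eventual strong prefix fails. (A repair would require re-agreeing on entire chains in later instances rather than fixing one block per instance.) Note also that the paper only proves the implication eventual strong prefix $\Rightarrow$ eventual consensus (Theorem~\ref{th:eQEPUBandEC}), not the converse, so the direction you need is not available off the shelf and must be argued in full.

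Second, and more seriously, your eventual-consensus implementation under a bare correct majority is not established, and the post-GST timing argument you offer does not close the hole you yourself identify. After GST a Byzantine coordinator (which recurs forever under round-robin scheduling) can equivocate, sending $v$ to one half of the correct processes and $v'$ to the other; with close to $n/2$ Byzantine endorsers backing both values, each conflicting proposal can reach a majority quorum, so disagreement recurs for infinitely many instances and EC-Agreement is violated. Timing and signatures alone do not prevent this. The missing idea is \emph{accountability}: the paper's proof takes a quorum-based protocol (Streamlet) with majority notarization, accepts that forks can occur, and shows that every fork exposes at least one process that voted for two provably inconsistent blocks; such processes are detected and excluded, so the number of forks is bounded by the (finite) number of Byzantine processes and the chain eventually stabilizes. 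Without some mechanism of this kind that makes each disagreement consume a Byzantine identity, your protocol has no argument for why disagreements ever stop.
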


In order to prove the existence of a solution, we start from an existing solution, Streamlet \cite{streamlet}, that guarantees strong prefix consistency under the assumption of less than a third of byzantine processes and eventual synchrony with a known message delay $\Delta$. We weaken both of these assumptions to provide a solution to the eventual strong prefix consistency criteria. In particular, we assume only a majority of correct processes, we do not explicitly use $\Delta$ and consider a slightly modified version of the protocol. In the following we first describe Streamlet and the discuss the modifications before providing the proof.

{\bf Streamlet protocol.} The Streamlet protocol works in a partially synchronous system with a known message delay $\Delta$ and a finite set of $n$ processes. In particular, before the Global Stabilisation Time (GST), message delays can be arbitrary; however, after GST, messages sent by correct processes are guaranteed to be received by correct processes within $\Delta$ time. The following assumption holds: $\Delta$-bounded assumption during periods of synchrony, when an honest node sends a message at time $t$, an honest recipient is guaranteed to receive it by time $max(GST, t + \Delta)$. \footnote{Notice that, in Streamlet \cite{streamlet} there is not the notion of time but of round, which denotes a basic unit of time.}

In Streamlet~\cite{streamlet},  each  epoch, composed of $2\Delta$, has  a  designated  leader  chosen  at  random  by  a publicly known hash function. The protocol works as follows:
\begin{itemize}
    \item {\bf Propose-Vote.} In every epoch: 
    \begin{itemize}
        \item The epoch’s designated leader proposes a new block extending from the longest notarized chain it has seen (if there are multiple,  break ties arbitrarily).  The notion “notarized” is defined  below.
        \item Every  process  votes  for  the  first  proposal  they  see  from  the  epoch’s  leader,  as  long  as  the proposed block extends from (one of) the longest notarized chain(s) that the voter has seen. A vote is a signature on the proposed block.
        \item When a block gains votes from at least 2n/3 distinct processes, it becomes notarized.  A chain is notarized if its constituent blocks are all notarized.
    \end{itemize}
    \item {\bf Finalize.} Notarized does not mean final.  If in any notarized chain, there are three adjacent blocks with consecutive epoch numbers,  the prefix of the chain up to the second of the three blocks is considered final.  When a block becomes final, all of its prefix must be final too.
\end{itemize}
We propose protocol $\mathcal{S}$ with  the following modifications to Streamlet. First, we only require that a block gains votes from a majority of distinct processes to become notarized, which means that forks can occur. The second modification goes deeper: if a fork occurs, then it is possible to detect Byzantine processes and to exclude them from the voters. This is done as follows. When, two conflicting chains are finalized, that is two finalized chains that are not the prefix of one another, then processes look for inconsistent blocks. That is, two notarized blocks $b, b^\prime$ are inconsistent  with one another if one of the following two conditions hold:
\begin{itemize}
    \item{Cond. 1} $b$ and $b^\prime$ share the same epoch.
    \item{Cond. 2} $b$ (resp. $b^\prime$) that corresponds to a higher epoch, $b.\mathit{epoch}<b^\prime.\mathit{epoch}$ (resp. $b^\prime.\mathit{epoch}<b.\mathit{epoch}$), with a strictly smaller height, $b^\prime.\mathit{height}<b.\mathit{height}$ (resp. $b.\mathit{height}<b^\prime.\mathit{height}$), than the other block.
\end{itemize}

If a process votes for blocks inconsistent with one another  then it is detected as Byzantine. 

\begin{proof}
Let us show that $\mathcal{S}$ is a solution for eventual strong prefix. Let us first show that when a fork occurs, then we detect at least a Byzantine process. For this let us show that it implies the existence of two inconsistent blocks and that the intersection of voters of the two blocks is not empty as two majority intersect. In the following we show that voting for two inconsistent blocks is a Byzantine failure. 

Let us first show that voting for two inconsistent blocks $b$ and $b'$ is a Byzantine failure. If the two blocks are inconsistent for Cond. 1, then the intersecting voters are Byzantine as correct processes vote only once per epoch. It follows that if a process $q$ votes for $b$ and $b'$ then $q$ is Byzantine.
If the two blocks are inconsistent for Cond. 2, then the intersecting voters are Byzantine as correct processes vote only for blocks extending one of the longest notarized chains. That is, if a correct process $p$ votes for $b$ it means that $b$ is extending a notarized block $b_{pred}$ that is of height $b.\mathit{height}-1$, therefore $p$ cannot vote for a block $b'$ later on with a height strictly smaller than $b.\mathit{height}$ because it needs to extend one of the longest notarized chain. It follows that if a process $q$ votes for $b$ and $b'$ then $q$ is Byzantine.  

Let us now show that when a fork occurs we must have two inconsistent blocks. Indeed, if there is a fork then we have two sequences of three adjacent blocks with consecutive epochs, $b_1,b_2,b_3$ and $b_1',b_2',b_3'$ (by construction, given the finalization rule). If no blocks share the same epoch number then we can assume w.l.o.g. that $b_3.\mathit{epoch}< b_1'.\mathit{epoch}$. Let block $b'$ be the block with the smallest height that is in the prefix of $b_3'$ such that it has a greater epoch than $b_1$ (such block always exists as $b_1'$ satisfies those conditions). Either $b'.\mathit{height}<b_3.\mathit{height}$ meaning that $b'$ is inconsistent with $b_3$ or else, $b'.\mathit{height}\geq b_3.\mathit{height}$ meaning that the predecessor of $b'$ is inconsistent with $b_1$. Indeed, the predecessor of $b'$ has a strictly smaller height than $b_1$ and by assumption has a smaller epoch number than $b_1$.  Hence there is always a couple of inconsistent blocks in a fork.

Let us now conclude our proof that we solve the eventual strong prefix property. If a fork occurs, then each correct process eventually detects at least one Byzantine process and ignore its votes, hence, we have a finite number of forks as we have a finite number of Byzantine processes, hence eventually there is always a single chain that is finalized. As there is a majority of correct processes, we remain live as in the original Streamlet Protocol. We also inherit its properties for finalizing blocks eventually when synchrony is reached.
\end{proof}

\section{Conclusion}

In this work we closed the gap between eventual prefix and strong prefix consistency. We firstly addressed the question: ``can we  design an asynchronous deterministic blockchain solution to solve bounded displacement eventual prefix ?". We formally showed that  the answer is ``no". On the positive side,  we provided for the first time (i) a solution to eventual strong prefix with a majority of correct processes and (ii) an asynchronous solution to eventual prefix with an unlimited number of Byzantine processes.

\end{document}